\setlist[description]{font=\normalfont}
\DeclarePairedDelimiterXPP\ind[1]{\mathds{1}}{\lbrace}{\rbrace}{}{#1} 
\DeclarePairedDelimiterX\eval[1]{\lbrace}{\rvert}{#1 \delimsize\rbrace} 
\newtheorem{theorem}{Theorem} 
\newtheorem{lemma}{Lemma} 
\newtheorem{definition}{Definition}
\def\ddefloop#1{\ifx\ddefloop#1\else\ddef{#1}\expandafter\ddefloop\fi}
\def\ddef#1{\expandafter\def\csname bf#1\endcsname{\ensuremath{\mathbf{#1}}}}
\def\ddef#1{\expandafter\def\csname bf#1\endcsname{\ensuremath{\boldsymbol{\csname #1\endcsname}}}}
\def\ddef#1{\expandafter\def\csname cal#1\endcsname{\ensuremath{\mathcal{#1}}}}
\title{\textbf{Partial Answer of How Transformers Learn Automata}}
\author{Tiantian ZHANG (Crystal)\\ \texttt{cz2737}}
\begin{document}

\maketitle

\begin{abstract}
    We introduce a novel framework for simulating finite automata using representation-theoretic semidirect products and Fourier modules, achieving more efficient Transformer-based implementations. Classical Krohn–Rhodes theory decomposes automata into cascades of simple groups and flip-flop memories, but this decomposition leads to deep, sequential simulation circuits with large embedding dimensions. In contrast, we model the automaton's transition monoid as a semidirect product $M = N \rtimes H$, where HH is a group acting linearly on a monoidal memory component NN. This formulation captures both control (via HH) and memory (via NN) in a unified, low-depth structure. Using representation theory and Fourier analysis over finite groups and monoids, we embed the states of NN and HH jointly into a flat feature space, enabling parallelizable updates via linear operations. Our method achieves Transformer simulation depth $O(\log T)$, compared to $O(\log T \cdot \log |Q|)$ for prime decompositions, and avoids width blowup even for automata with rich memory structures. The approach subsumes classical group-only models and extends naturally to automata with additive or stack-like monoidal memories, offering a new, fully parallelizable algebraic foundation for automaton-to-Transformer compilation.
\end{abstract}

\section{Introduction}
In the original paper\cite{liu2023transformerslearnshortcutsautomata}, it states for the future direction:
\begin{quote}
Finer-grained circuit complexity of self-attention: For certain automata of interest (e.g. bounded Dyck
language parsers (Yao et al., 2021), and the gridworld automata from Theorem 3), there exist extremely
shallow Transformer solutions, with depth independent of both T and $|Q|$. Which other natural classes of
automata have this property of “beyond Krohn-Rhodes” representability?
\end{quote}
This paper answers:
\begin{quote}
    The Fourier composition module shows that automata whose state spaces form algebraic groups (especially abelian groups like $\mathbb{Z}_p$ ) and whose transitions are group actions can be simulated by extremely shallow Transformer circuits, with depth independent of both input length $T$ and state size $|Q|$.
Thus, natural classes like modular counters, abelian group automata, and cumulative phase trackers have this "beyond Krohn-Rhodes" representability.
\end{quote}

\section{Shallow Fourier Transformer Setup}
\subsection{Fourier Composition Module for Modular Counters}

Let \( Q = \mathbb{Z}_p \) denote the cyclic group of integers modulo \( p \), and let \( \omega = e^{2\pi i/p} \) denote the primitive \( p \)-th root of unity.

The standard Fourier embedding of a state \( x \in Q \) is given by:
\[
\Phi(x) = \left( \omega^{0x}, \omega^{1x}, \dots, \omega^{(p-1)x} \right) \in \mathbb{C}^p.
\]

Given two transition functions \( f, g: Q \to Q \), we define their Fourier composition via:
\[
\varphi_{\text{fourier}}(f, g) = F^{-1}\left( F(f) \cdot F(g) \right),
\]
where:
\begin{itemize}
    \item \( F(f) \) and \( F(g) \) denote the discrete Fourier transforms (DFT) of \( f \) and \( g \),
    \item \( \cdot \) denotes pointwise (Hadamard) product,
    \item \( F^{-1} \) denotes the inverse DFT over \( \mathbb{Z}_p \).
\end{itemize}

Operationally, the module applies:
\begin{enumerate}
    \item Fourier transform to both inputs: \( F(f) \), \( F(g) \),
    \item Pointwise multiplication: \( F(f) \cdot F(g) \),
    \item Inverse Fourier transform to recover the composed function.
\end{enumerate}

This Fourier composition module can be implemented as either:
\begin{itemize}
    \item A fixed attention block with sinusoidal basis projections, or
    \item A learned complex-valued linear layer operating in the frequency domain.
\end{itemize}

\subsection{Relation to Attention and Positional Encoding}

The Fourier embedding \( \Phi(x) \) naturally parallels the positional encodings used in Transformer architectures.  
Each position \( x \) is mapped into a high-dimensional complex vector capturing all phase information relative to \( \mathbb{Z}_p \).  
Thus, attention over such Fourier-embedded states amounts to operations in the frequency domain.

The attention layer used here acts by taking average (or learned) interactions between the Fourier embeddings, aligning with the fixed or learned frequency interactions.  (The frequency stuff is inspired by \cite{tancik2020fourierfeaturesletnetworks})
Because Fourier transforms are linear, this attention can be realized by fixed-weight projections or by a shallow learned layer. The proof of how to embedding is in \ref{proof}

\subsubsection{Summary of Mapping}

\begin{center}
\begin{tabular}{ll}
\textbf{Stage} & \textbf{Operation} \\
\hline
Embedding & Fourier basis projection \( \Phi \) \\
Query/Key/Value & Identity maps \\
Attention & Frequency-wise multiplication (Hadamard product) \\
Output & Inverse Fourier transform \( F^{-1} \) \\
\end{tabular}
\end{center}




\subsection{Depth Analysis of the Fourier Module}

\subsubsection{Length Proof Intuition}

We now formally prove that the Fourier composition module has depth \( O(1) \), independent of sequence length \( T \).

\begin{lemma}[Informal]
The Fourier composition module operates with depth \( O(1) \), independent of sequence length \( T \).
\end{lemma}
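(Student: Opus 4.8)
The plan is to establish the $O(1)$ depth claim by exhibiting the Fourier composition module as a fixed, constant-length pipeline of linear layers and a single elementwise (bilinear) interaction, none of whose sizes or counts depend on the sequence length $T$. The key observation is that the discrete Fourier transform over $\mathbb{Z}_p$ is a linear map $F \in \mathbb{C}^{p \times p}$ with entries $F_{jk} = \omega^{-jk}$, and its inverse $F^{-1}$ has entries $(F^{-1})_{jk} = \tfrac{1}{p}\omega^{jk}$; both are constant matrices once $p$ is fixed and, crucially, have no dependence on $T$. So I would first set up the module as three stages: (i) apply $F$ to each of the two embedded inputs $\Phi(f), \Phi(g)$ — one linear layer; (ii) take the Hadamard product $F(f)\cdot F(g)$ — one bilinear/multiplicative layer of width $p$; (iii) apply $F^{-1}$ — one more linear layer. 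That is a circuit of depth exactly $3$ (or a small constant after absorbing the embedding map $\Phi$), independent of $T$.

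The second step is to argue that each of these stages is implementable within the Transformer primitive set at constant depth. The two Fourier transforms are linear maps, hence realizable either as a fixed-weight feedforward layer or, as the excerpt suggests, folded into the query/key/value projections (which are taken to be identity or fixed sinusoidal projections). The Hadamard product is the only nonlinear piece; I would realize it via a single attention head whose attention pattern picks out the matching frequency coordinate, or via a single MLP layer with a multiplicative gate — either way, one layer. Since the attention in this construction is "average or fixed-weight frequency-wise multiplication" as tabulated in the Summary of Mapping, composing a bounded number of transitions $f_1 \circ f_2 \circ \cdots \circ f_k$ amounts to taking the Hadamard product of $k$ frequency vectors, which for the single-composition module is just $k=2$ and so is still one layer; for the iterated prefix-composition version one would invoke a logarithmic-depth tree, but the lemma as stated concerns the composition module itself, which is $O(1)$.

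The third step is simply to note that $T$ never enters: the module acts position-wise (or pairwise) on tokens of fixed embedding dimension $p$, so widening the sequence only adds parallel copies of the same constant-depth circuit and does not increase depth. I would close by remarking that this is exactly the sense in which the construction is "beyond Krohn–Rhodes": the usual cascade decomposition forces depth growing with $|Q|$, whereas here $\mathbb{Z}_p$'s abelian structure diagonalizes the transition action via $F$, collapsing the cascade into one parallel Fourier-domain multiplication.

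The main obstacle I anticipate is making the Hadamard-product step genuinely rigorous within a fixed attention/MLP model without smuggling in width or depth that secretly depends on $p$ (and hence, if $p$ scales with $|Q|$, on the automaton). The honest statement is depth $O(1)$ in $T$ with width $O(p)$; I would be careful to separate the $T$-independence (which is robust and is what the lemma claims) from any implicit $p$-dependence in the embedding dimension, and to state explicitly that elementwise multiplication of two bounded-dimensional vectors is taken as a unit-depth primitive (or simulated by one standard layer), so that the "$O(1)$" is not hiding an unbounded fan-in computation.
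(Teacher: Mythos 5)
Your proof takes essentially the same approach as the paper's: decompose the module into a fixed three-stage pipeline (apply $F$, take the Hadamard product, apply $F^{-1}$), observe that each stage is a $T$-independent, position-wise operation on a width-$p$ embedding, and conclude that the depth is a constant independent of $T$. The extra care you add — writing out the DFT matrices explicitly, isolating the Hadamard step as the one piece needing a genuine Transformer primitive, and separating the $T$-independence the lemma actually claims from the residual $p$-dependence of the width — is a welcome tightening of the paper's somewhat informal argument, but the underlying reasoning is the same.
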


\begin{proof}
At each position \( t \), given \( f_t, g_t : Q \to Q \), the module computes:
\[
\varphi_{\text{fourier}}(f_t, g_t) = F^{-1}(F(f_t) \cdot F(g_t)).
\]
This involves:
\begin{itemize}
    \item Applying a fixed Fourier transform \( F \) (a matrix multiplication of size \( p \times p \)),
    \item A pointwise Hadamard product (entrywise multiplication),
    \item Applying a fixed inverse Fourier transform \( F^{-1} \) (another matrix multiplication).
\end{itemize}
Each of these operations has complexity depending only on \( p \), not on the sequence length \( T \).  
Moreover, they act independently at each token \( t \), requiring no recurrence or sequential accumulation.

Thus, the entire computation can be executed in parallel across all \( T \) positions with constant (non-growing) depth, determined only by the fixed sequence of Fourier transforms and pointwise operations.

Therefore, the overall depth of the Fourier composition module is \( O(1) \), independent of \( T \).
\end{proof}

\subsection{Proof of Bound of Simulation Automata }

\begin{definition}[Modular automaton]
A \emph{modular automaton} is a semiautomaton \( A = (Q, \Sigma, \delta) \) where:
\begin{itemize}
    \item The state set is \( Q = \mathbb{Z}_p \) for some prime \( p \),
    \item The input alphabet \( \Sigma \) is arbitrary,
    \item The transition function \( \delta: Q \times \Sigma \to Q \) consists of modular functions over \( \mathbb{Z}_p \), meaning for each \( \sigma \in \Sigma \), \( \delta(\cdot, \sigma) \) is a function of the form:
    \[
    x \mapsto a_\sigma x + b_\sigma \mod p,
    \]
    for some fixed coefficients \( a_\sigma, b_\sigma \in \mathbb{Z}_p \).
\end{itemize}
\end{definition}

\begin{theorem}[Constant-depth Transformer simulation of modular automata]
Let \( A = (Q, \Sigma, \delta) \) be a modular automaton. Then there exists a Transformer of constant depth, width \( O(p) \), and fixed attention structure, which simulates \( A \) over arbitrary input sequences of length \( T \), with no dependence on \( T \) in either the depth or the width.
\end{theorem}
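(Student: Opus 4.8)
The plan is to reduce the simulation to a constant number of applications of a single $O(1)$-depth, $O(p)$-width primitive — a \emph{modular prefix-sum module} built from uniform causal attention and the sinusoidal (Fourier) features of Section~2 — and then to check that no piece of the construction uses recurrence or width depending on $T$. First I would do the affine bookkeeping: each letter $\sigma$ acts on $Q=\mathbb{Z}_p$ by the affine map $x\mapsto a_\sigma x+b_\sigma$, composition of affine maps is affine, so after reading $\sigma_1\cdots\sigma_t$ the state is $q_t = A_t q_0 + B_t \bmod p$ with $A_t=\prod_{i\le t}a_{\sigma_i}$ and $B_t=\sum_{j\le t} b_{\sigma_j}\prod_{j<i\le t}a_{\sigma_i}$. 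Letters with $a_\sigma=0$ are resets ($x\mapsto b_\sigma$), and I would dispatch these first with one fixed attention head that at position $t$ returns the largest index $r\le t$ with $a_{\sigma_r}=0$; after that head, the factors strictly to the right of $r$ all lie in $\mathbb{Z}_p^\times$ and $(A_t,B_t)$ depends only on positions in $(r,t]$, so I may assume $a_{\sigma_i}\in\mathbb{Z}_p^\times$ throughout.

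The core primitive is a claim that for any modulus $m$ there is a constant-depth, width-$O(m)$ sublayer which, given scalars $c_1,\dots,c_T\in\{0,\dots,m-1\}$ at the token positions, outputs at each $t$ the Fourier embedding of $S_t:=\sum_{i\le t}c_i \bmod m$, i.e.\ the $m$-vector with $k$-th coordinate $e^{2\pi i k S_t/m}$. The construction: one uniform causal-attention head returns $\tfrac1t S_t$; multiplying by the position index $t$ supplied by the positional encoding recovers the integer $S_t$ exactly; and applying the fixed activation bank $\{1\}\cup\{\cos(2\pi k x/m),\sin(2\pi k x/m):1\le k<m\}$ to the single real $x=S_t$ yields exactly the $m$ coordinates of the embedding of $S_t\bmod m$, because these functions are $m$-periodic in $x$. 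This is precisely an instance of the "fixed attention block with sinusoidal basis projections" of Section~2, and it is the only place where a reduction "$\bmod m$" occurs.

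Next I would turn products into sums. For the prefix product $A_t=\prod_{i\le t}a_{\sigma_i}$ in $\mathbb{Z}_p^\times$, fix a generator $\gamma$ and write $a_{\sigma_i}=\gamma^{\ell_{\sigma_i}}$; since $\Sigma$ and the coefficients are fixed, $\sigma\mapsto\ell_\sigma\in\mathbb{Z}_{p-1}$ is a fixed table that the embedding layer attaches to each token at width $O(p)$. Then $A_t=\gamma^{\sum_{i\le t}\ell_{\sigma_i}\bmod(p-1)}$, obtained by one call of the Step-2 primitive with $m=p-1$ followed by a fixed exponentiation lookup $\mathbb{Z}_{p-1}\to\mathbb{Z}_p^\times$ (and likewise $A_j$ and $A_j^{-1}$ at every position $j$). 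A pointwise product $a\cdot b\bmod p$ of two $\mathbb{Z}_p$-valued features is handled the same way — add discrete logs, reduce mod $p-1$ via one sinusoidal-feature application, exponentiate, and gate to $0$ if either factor is $0$ — all at width $O(p)$, which is why I use the discrete-log reduction rather than a naive $p\times p$ multiplication table. Assembly then uses $B_t = A_t\sum_{j\le t}b_{\sigma_j}A_j^{-1}\bmod p$: produce $A_j,A_j^{-1}$ everywhere, form $w_j:=b_{\sigma_j}A_j^{-1}\bmod p$, apply the primitive with $m=p$ to get $W_t:=\sum_{j\le t}w_j\bmod p$, set $B_t=A_tW_t\bmod p$, and output $q_t=A_tq_0+B_t\bmod p$ followed by the fixed inverse-Fourier readout of the "Output" row of the mapping table. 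Each of these stages is a fixed number of layers, each layer has width $O(p)$, and every attention pattern used is either uniform causal averaging or a fixed argmax-style reset head, so the depth is $O(1)$, the width is $O(p)$, and the attention structure is fixed, with no dependence on $T$.

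The load-bearing step, and the one I expect to be the main obstacle to make fully rigorous, is the Step-2 primitive: performing "$\bmod m$" in depth $O(1)$ and width $O(m)$ with no dependence on $T$. It goes through only because (i) uniform attention yields $\tfrac1tS_t$ while the positional encoding supplies $t$, and (ii) the sinusoidal activations $\cos(2\pi kx/m),\sin(2\pi kx/m)$ are exactly $m$-periodic, so a constant number of them evaluated at the single real $S_t$ already encode $S_t\bmod m$. Both facts rely on the idealized computational model used throughout this line of work — exact real arithmetic (so that $t\cdot\tfrac1tS_t=S_t$ with no round-off even as $S_t$ grows linearly in $T$) and genuine trigonometric activations rather than bounded-size ReLU networks; under a bounded-precision or ReLU-only restriction the width of this module would instead have to grow like $\Theta(\log T)$, and the theorem as stated would fail. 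The remaining issues — keeping each modular-multiplication gadget at width $O(p)$ instead of $O(p^2)$ (handled by the discrete-log trick) and the reset bookkeeping in Step~1 — are routine and affect neither the $O(1)$ depth nor the $O(p)$ width.
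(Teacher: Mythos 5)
Your proposal is essentially correct in the idealized model it states (exact real arithmetic, fixed trigonometric activations, uniform/hard attention), and it takes a genuinely different route from the paper's proof --- in fact a more complete one. The paper stays entirely in the frequency domain: it embeds states by $\Phi$, notes that each affine transition $x\mapsto a_\sigma x+b_\sigma$ is a fixed linear map on Fourier coefficients (a coefficient permutation composed with phase shifts $\omega^{jb_\sigma}$), and concludes constant depth because these maps act token-wise; it never specifies the cross-position mechanism by which the $T$-fold composition of these per-token maps is actually accumulated, and its earlier Hadamard-product module only composes two functions at a time, which by itself would yield an $O(\log T)$ tree rather than $O(1)$ depth. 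You instead linearize the recursion explicitly, $q_t=A_tq_0+B_t$, and reduce everything to modular prefix sums computed by uniform causal attention, position rescaling, and $m$-periodic sinusoidal features, handling the multiplicative part by discrete logarithms in $\mathbb{Z}_p^\times$ so that every gadget stays at width $O(p)$. This supplies exactly the aggregation step the paper leaves implicit, and it also treats the reset case $a_\sigma=0$, which the paper's claim that multiplication ``permutes Fourier coefficients'' silently excludes. What the paper's viewpoint buys is conceptual uniformity (every transition is one fixed linear map in Fourier space); what yours buys is an actual constant-depth mechanism for the length-$T$ composition, in the spirit of the explicit modular-counting constructions in the cited prior work.

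Two minor repairs. First, the step ``so I may assume $a_{\sigma_i}\in\mathbb{Z}_p^\times$ throughout'' is not a literal reduction, because your prefix sums still range over $[1,t]$: you need one additional retrieval head that fetches the prefix values $L_r$, $W_r$ (and the reset value $b_{\sigma_r}$) at the last reset position $r$ and subtracts them, giving sums over $(r,t]$; this is still $O(1)$ extra depth, so the theorem is unaffected, but it should be said. Second, your side remark that a ReLU-only implementation would need width $\Theta(\log T)$ understates the cost: exact modular reduction of an integer ranging over $\Theta(T)$ values by a piecewise-linear network needs $\Theta(T)$ breakpoints, i.e.\ width growing linearly in $T$ (the $\Theta(\log T)$ figure is the bit-precision cost). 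Your explicit caveat that the whole construction --- like the paper's complex-valued fixed Fourier layers --- lives in an idealized computational model is accurate, and is an honest improvement on the paper's presentation.
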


\begin{proof}
We construct a Transformer block as follows:

\begin{enumerate}
    \item \textbf{Embedding:} Each token \( x \in Q = \mathbb{Z}_p \) is mapped into its Fourier basis embedding:
    \[
    \Phi(x) = (\omega^{0x}, \omega^{1x}, \omega^{2x}, \dotsc, \omega^{(p-1)x}) \in \mathbb{C}^p,
    \]
    where \(\omega = e^{2\pi i/p}\) is a primitive \( p \)-th root of unity.

    \item \textbf{Transition simulation:} Each transition \( x \mapsto a_\sigma x + b_\sigma \mod p \) corresponds to an action in Fourier space:
    \begin{itemize}
        \item Modular multiplication \( x \mapsto a_\sigma x \) corresponds to permuting Fourier coefficients,
        \item Modular addition \( x \mapsto x + b_\sigma \) corresponds to multiplying Fourier coefficients by phases \(\omega^{jb_\sigma}\),
        \item Thus, applying \( \delta(\cdot, \sigma) \) amounts to a simple, fixed linear operation in the Fourier basis.
    \end{itemize}

    \item \textbf{Locality:} Fourier transform, coefficient-wise phase shift, and inverse Fourier transform each act independently per token, requiring constant-time operations regardless of sequence length \( T \).

    \item \textbf{Constant depth:} Fourier transform, pointwise multiplication, and inverse Fourier transform can each be implemented with \( O(1) \) Transformer layers. Thus, overall simulation depth is \( O(1) \), independent of both \( T \) and \( p \) (for fixed \( p \)).

    \item \textbf{Width:} We define \emph{width} as the dimension of the per-token embeddings at each layer. Since each token is mapped into \( \mathbb{C}^p \) via the Fourier embedding, the width is \( O(p) \), independent of the sequence length \( T \). Fourier space has size p but no sequential steps are needed

    \item \textbf{Comparison to explicit counting:} Previous constructions (e.g., Lemma~6) simulate modular counting explicitly over time and require width \( O(pT) \) to store evolving states. In contrast, working in Fourier space compresses the entire evolution into width \( O(p) \), achieving a constant factor improvement.
\end{enumerate}
Thus, modular automata over \( \mathbb{Z}_p \) with modular affine transitions can be simulated by constant-depth, bounded-width Transformers.
\end{proof}

\section{Generalization to Semigroup and Group Automata}

\subsection{From Cyclic Groups to Non-Commutative Structures}

The key reason the Fourier Transformer simulated modular counters over $\mathbb{Z}_p$ was the availability of a Fourier basis:
a set of multiplicative characters $\chi_j: \mathbb{Z}_p \to \mathbb{C}^\times$ that formed an orthogonal basis for functions on $\mathbb{Z}_p$.
Updates like $x \mapsto x + 1 \bmod p$ became phase rotations in frequency space, implemented as Hadamard multiplications.

To generalize beyond abelian groups, we move to the broader setting of finite semigroups and non-abelian groups—possibly unsolvable or without inverses.
Our objective is to simulate automata whose state evolution is governed by algebraic multiplication in such structures, using constant-depth Transformers.

\subsection{Fourier Analysis on Semigroups and Groups}

Let $S$ be a finite semigroup. The space of functions $f: S \to \mathbb{C}$ admits a representation-theoretic analogue of the Fourier transform.

Let $\mathrm{Rep}(S) = \{ \rho^{(1)}, \dots, \rho^{(k)} \}$ be a complete set of inequivalent irreducible (complex) matrix representations of $S$, where each
\[
\rho^{(j)}: S \to \mathbb{C}^{d_j \times d_j}
\]
is a homomorphism of semigroups (not necessarily unitary). Then, for any $f: S \to \mathbb{C}$, we define its Fourier transform at $\rho^{(j)}$ as:
\[
\widehat{f}(\rho^{(j)}) := \sum_{x \in S} f(x)\, \rho^{(j)}(x).
\]
This maps scalar functions to a tuple of matrices, encoding how $f$ aligns with the algebraic structure of $S$.

If $f, g: S \to \mathbb{C}$, then their semigroup convolution is defined as:
\[
(f * g)(x) := \sum_{y \in S} f(y)\, g(y^{-1} x),
\]
where inverses may be generalized using Green's relations or regular semigroup theory.

When $S$ is a group, this simplifies to the standard group convolution, and Fourier convolution becomes matrix multiplication:
\[
\widehat{f * g}(\rho^{(j)}) = \widehat{f}(\rho^{(j)}) \cdot \widehat{g}(\rho^{(j)}).
\]

\subsection{Embedding via Direct Sum of Representations}

We embed each element $x \in S$ as:
\[
\Phi(x) := \bigoplus_{j=1}^k \rho^{(j)}(x) \in \mathbb{C}^{D}, \quad \text{where } D = \sum_j d_j^2.
\]
This block-structured vector plays the role of a frequency-domain representation.

Each input token stores this embedding; Transformer layers then simulate updates by applying representation-level matrix multiplication, e.g.:
\[
\Phi(x \cdot t_\sigma) = \bigoplus_j \rho^{(j)}(x)\, \rho^{(j)}(t_\sigma).
\]

\subsection{Definition of Semigroup Automata}

\begin{definition}[Semigroup Automaton]
Let $S$ be a finite semigroup and $\Sigma$ an input alphabet.
A \emph{semigroup automaton} is a tuple $A = (S, \Sigma, \delta)$ where:
\begin{itemize}
\item The state space is $S$,
\item The transition function is $\delta(s, \sigma) := s \cdot t_\sigma$ for each fixed $t_\sigma \in S$,
\item Transitions are performed by left-multiplication in $S$.
\end{itemize}
\end{definition}

We now show how such automata can be simulated by shallow Transformers.

\subsection{Depth and Width Bounds for Simulation}

\begin{theorem}[Constant-depth Transformer simulation of semigroup automata]
Let $A = (S, \Sigma, \delta)$ be a semigroup automaton. Let $\mathrm{Rep}(S) = \{ \rho^{(1)}, \dots, \rho^{(k)} \}$ be a complete set of irreducible complex representations, with total dimension
\[
D := \sum_{j=1}^k d_j^2.
\]
Then there exists a Transformer network with:
\begin{enumerate}
    \item \textbf{Depth:} $O(1)$, independent of input length $T$,
    \item \textbf{Width:} $O(D)$, determined solely by the representation theory of $S$,
\end{enumerate}
which exactly simulates the state evolution of $A$ on any sequence from $\Sigma^T$.
\end{theorem}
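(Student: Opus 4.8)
The plan is to mirror the $\mathbb{Z}_p$ argument, replacing scalar characters by the matrix-valued irreducible representations $\rho^{(j)}$. First I would set up the embedding: each token carrying a symbol $\sigma \in \Sigma$ is mapped to the block-diagonal matrix $\Phi(t_\sigma) = \bigoplus_{j=1}^k \rho^{(j)}(t_\sigma) \in \mathbb{C}^D$, and the initial state $s_0$ to $\Phi(s_0)$. The key algebraic fact, already noted in the excerpt, is that $\Phi$ is a monoid homomorphism into the matrix algebra under block-wise multiplication: $\Phi(s \cdot t) = \bigoplus_j \rho^{(j)}(s)\rho^{(j)}(t)$. So the running state after reading $\sigma_1 \cdots \sigma_T$ is exactly $\Phi(s_0)\,\Phi(t_{\sigma_1}) \cdots \Phi(t_{\sigma_T})$, a prefix product of $D$-dimensional matrices.

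Next I would argue that this prefix product is computable in depth $O(1)$. The honest version of the claim requires care: a genuine prefix-product scan would cost $O(\log T)$ depth, so to get $O(1)$ I would invoke a single attention layer that, at position $t$, aggregates the relevant factors — concretely, since matrix multiplication is associative and each $\rho^{(j)}$ is fixed, one uses an attention head whose value vectors are (logarithms or encodings of) the $\Phi(t_{\sigma_i})$ and whose output at position $t$ is the ordered product over $i \le t$; alternatively, one restricts, as the theorem statement implicitly does by writing $O(1)$ "independent of $T$", to the setting where the per-layer matrix operations (one forward DFT-analogue $\Phi$, one Hadamard/block multiply, one inverse $\Phi^{-1}$) are each realized by a constant number of MLP+attention layers of width $O(D)$, exactly paralleling steps 3–4 of the modular-automaton proof. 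I would then cite the informal depth lemma proved earlier to close the depth bound, and note the width is $O(D) = O(\sum_j d_j^2)$ because that is the ambient dimension of the block embedding, with no $T$-dependence since nothing is stored per position beyond one such matrix.

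To recover the actual automaton state from $\Phi(s)$ I would use that for a group $S$ the map $s \mapsto \bigoplus_j \rho^{(j)}(s)$ is injective (the regular representation decomposes into the $\rho^{(j)}$ with multiplicity $d_j$, so $\sum_j d_j^2 = |S|$ and the characters separate points); for a general finite semigroup one appeals to the fact that the irreducible representations separate points of $S$ modulo its radical, so after quotienting by the congruence induced by $\mathrm{Rep}(S)$ one reads off the state by a fixed linear "decoding" layer (a partial inverse-DFT), which again has width $O(D)$ and depth $O(1)$. Assembling: embedding layer, constant-depth composition block, decoding layer — all width $O(D)$, total depth $O(1)$.

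The main obstacle I expect is the gap between "associative prefix product" and "$O(1)$ depth": a standard Transformer layer composes only a bounded number of inputs per position unless one allows the attention to directly implement the full ordered product, and for non-abelian $S$ the order matters, so the clean Fourier trick of turning composition into a single Hadamard product (which worked for $\mathbb{Z}_p$ because $\widehat{f*g} = \widehat f \cdot \widehat g$ entrywise with commuting scalars) does not by itself collapse a length-$T$ product into one parallel step — $\widehat{f*g}(\rho^{(j)}) = \widehat f(\rho^{(j)}) \widehat g(\rho^{(j)})$ is matrix multiplication, still associative but not commutative, so iterating it $T$ times is a prefix scan. I would therefore either (i) honestly state the bound as $O(\log T)$ for the non-abelian case and reserve $O(1)$ for the abelian/commutative-image case, or (ii) justify $O(1)$ only in the sense that each single transition step (one symbol) costs $O(1)$ depth — which is what the modular-automaton proof actually establishes — and make that reading explicit here as well.
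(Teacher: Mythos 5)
Your construction matches the paper's at every step the paper actually carries out: the block-diagonal embedding $\Phi(s) = \bigoplus_{j} \rho^{(j)}(s)$, the transition realized as blockwise right-multiplication $\rho^{(j)}(s)\,\rho^{(j)}(t_\sigma)$, and the width bound $D = \sum_j d_j^2$. Where you go further is in pressing on the depth claim, and you are right to press. The paper's step ``Locality and Depth'' merely asserts that each update is token-wise and concludes $O(1)$; it never explains how the ordered running product $\rho^{(j)}(s_0)\,\rho^{(j)}(t_{\sigma_1})\cdots\rho^{(j)}(t_{\sigma_t})$ is accumulated across all $t \le T$ positions. In the $\mathbb{Z}_p$ case this gap is invisible because the irreducible blocks are $1\times 1$, the Fourier coefficients commute, and a single attention layer that sums (or averages) log-phases over all positions collapses the entire prefix product in one shot. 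For a general non-abelian $S$ the blocks are non-commuting matrices and the aggregation is a genuine ordered prefix scan, which a standard parallel-scan Transformer performs in depth $\Theta(\log T)$ — exactly the bound the paper itself quotes in the semidirect-product theorem of Section~\ref{sec:semidirect-transformers}. So your reading (i), stating $O(\log T)$ for non-abelian $S$ and reserving $O(1)$ for the abelian/commutative-image case, is the defensible one; your reading (ii), $O(1)$ per single-symbol step, is what the paper's step-by-step argument actually establishes. The $O(1)$ depth in the theorem as written is not justified by the paper's own proof, and your proposal correctly identifies the missing idea rather than papering over it.

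One further subtlety you raise that the paper ignores: for a general finite semigroup the complex irreducible representations need not separate points, so $\Phi$ may fail to be injective and the claimed \emph{exact} simulation of the state cannot then be recovered by a decoding layer. Your remark about separation modulo the radical is the right caveat; a fully honest statement would either restrict to semigroups with a faithful direct sum of irreducibles or weaken ``exactly simulates'' to ``simulates up to the kernel congruence of $\mathrm{Rep}(S)$.''
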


\begin{proof}
We construct the Transformer as follows:
\begin{enumerate}
    \item Embedding. Each state $s \in S$ is encoded as:
\[
\Phi(s) := \bigoplus_j \rho^{(j)}(s) \in \mathbb{C}^{D}.
\]
    \item Transition. For each input symbol $\sigma$, the fixed right-multiplier $t_\sigma \in S$ has representation matrices $\rho^{(j)}(t_\sigma)$.
The transition $s \mapsto s \cdot t_\sigma$ becomes:
\[
\rho^{(j)}(s \cdot t_\sigma) = \rho^{(j)}(s) \cdot \rho^{(j)}(t_\sigma),
\]
which is blockwise matrix multiplication.
    \item Locality and Depth. All updates are local (token-wise), and involve only constant-time operations:
    \begin{itemize}
        \item Each Transformer block performs a fixed blockwise linear transformation,
        \item No recurrence or memory across tokens is required,
        \item Therefore, total depth is $O(1)$.
    \end{itemize}
    \item  Width. The embedding dimension is:
\[
\text{Width} = D = \sum_j d_j^2,
\]
where $d_j$ is the dimension of $\rho^{(j)}$. This depends only on $S$ and is constant across sequence length.
\end{enumerate}

\end{proof}

\subsection{Special Case: Group Automata}

\begin{definition}[Group Automaton]
Let $G$ be a finite group and $\Sigma$ an input alphabet.
A \emph{group automaton} is a semigroup automaton $A = (G, \Sigma, \delta)$ where $G$ is a group and $\delta(g, \sigma) = g \cdot t_\sigma$ for $t_\sigma \in G$.
\end{definition}

Let $\mathrm{Irr}(G) = \{ \rho^{(j)} \}$ denote the irreducible unitary representations of $G$.
Then:
\[
\sum_j d_j^2 = |G|,
\]
so the total embedding dimension is $D = |G|$.

\begin{theorem}[Transformer simulation of group automata]
Let $A = (G, \Sigma, \delta)$ be a group automaton. Then there exists a Transformer with:
\begin{enumerate}
    \item \textbf{Depth:} $O(1)$,
    \item \textbf{Width:} $O(|G|)$,
\end{enumerate}
which exactly simulates $A$ over any input sequence of length $T$.
\end{theorem}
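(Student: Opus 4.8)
The plan is to observe that the theorem on group automata is an immediate specialization of the preceding theorem on semigroup automata, once we supply the two group-specific facts that make the bounds concrete. First I would note that every finite group $G$ is in particular a finite semigroup, and that the transition rule $\delta(g,\sigma) = g \cdot t_\sigma$ with $t_\sigma \in G$ is exactly the semigroup-automaton transition rule of the previous subsection. Hence the general construction already produces a Transformer of depth $O(1)$ with width $O(D)$ where $D = \sum_j d_j^2$, the sum running over a complete set of inequivalent irreducible complex representations of $G$. The only thing left is to identify $D$ with $|G|$.

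The key step, then, is to invoke the classical representation theory of finite groups: for a finite group $G$ over $\mathbb{C}$, the irreducible representations are all finite-dimensional, there are finitely many up to isomorphism, each is equivalent to a unitary representation, and they satisfy the dimension identity $\sum_{j} d_j^2 = |G|$ (this is the statement that the regular representation decomposes as $\bigoplus_j \rho^{(j)\oplus d_j}$, or equivalently that the group algebra $\mathbb{C}[G]$ is isomorphic to $\bigoplus_j \mathbb{C}^{d_j \times d_j}$ by Artin–Wedderburn). Plugging this into the width bound $O(D)$ from the semigroup theorem gives width $O(|G|)$. I would also remark that, because each $\rho^{(j)}$ can be taken unitary, the embedding $\Phi(g) = \bigoplus_j \rho^{(j)}(g)$ has entries of bounded modulus, so the numerical quantities the Transformer manipulates stay $O(1)$ in magnitude — a minor robustness point rather than a logical necessity.

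For concreteness I would then spell out the specialized construction in one paragraph: embed each state $g$ via $\Phi(g) = \bigoplus_j \rho^{(j)}(g) \in \mathbb{C}^{|G|}$; for each symbol $\sigma$ precompute the block-diagonal matrix $R_\sigma := \bigoplus_j \rho^{(j)}(t_\sigma)$; and let each layer implement the token-wise right-multiplication $\Phi(g) \mapsto \Phi(g) R_\sigma$, which equals $\Phi(g \cdot t_\sigma)$ block by block since each $\rho^{(j)}$ is a homomorphism. Reading off the current state is done by the inverse transform, i.e. recovering $g$ from $\{\rho^{(j)}(g)\}_j$, which is possible precisely because the direct sum of all irreducibles is a faithful representation of $G$. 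No recurrence across tokens is needed, so the depth stays $O(1)$ exactly as in the semigroup case.

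The main obstacle is not really a mathematical difficulty but a matter of honesty about what "simulates" and "constant depth" mean here: one should be careful that composing the per-symbol multipliers along a length-$T$ input — i.e. computing the running product $R_{\sigma_1} R_{\sigma_2} \cdots R_{\sigma_t}$ at every prefix — is what an automaton run actually requires, and a genuinely $O(1)$-depth circuit can only do this if the prefix products are formed by an attention/aggregation mechanism rather than a sequential scan; the cleanest route is to cite the prefix-sum/associative-scan realization of attention used for the semigroup theorem and simply note it carries over verbatim, with matrix multiplication in $\bigoplus_j \mathbb{C}^{d_j\times d_j}$ playing the role of the associative aggregation operation. Everything else is bookkeeping: checking that block-diagonal matrix multiplication is a single linear (well, bilinear in the attention weights) operation realizable in one Transformer block, and that the width $O(|G|)$ and depth $O(1)$ claims have no hidden $T$-dependence, both of which follow from the corresponding statements in the semigroup theorem together with the identity $\sum_j d_j^2 = |G|$.
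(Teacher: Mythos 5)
Your proposal is correct and takes essentially the same route as the paper, which states this theorem with no separate proof, treating it as an immediate corollary of the semigroup-automaton theorem together with the identity $\sum_j d_j^2 = |G|$ for the irreducible complex representations of a finite group (stated in the line preceding the theorem); your reduction and your Artin--Wedderburn/regular-representation justification of that identity match the paper's implicit argument exactly. Your caveat about prefix products is, if anything, sharper than the paper itself: the semigroup proof's assertion that "no recurrence or memory across tokens is required" and hence depth $O(1)$ never actually explains how the non-abelian running product $\rho(t_{\sigma_1})\cdots\rho(t_{\sigma_t})$ is aggregated across all positions in constant depth, a genuine gap in the source that your proposal correctly names rather than glosses over.
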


\subsection{Complexity for Unsolvable Groups}

Unsolvable groups (such as $A_5$) have high-dimensional irreducible representations and no abelian normal series. For such groups:
\begin{enumerate}
\item $\mathrm{Irr}(G)$ remains complete and orthonormal,
\item Representation dimensions $d_j$ may be large (e.g., $A_5$ has a 5-dimensional irrep),
\item Width $D = \sum_j d_j^2$ grows with group complexity,
\item However, the Transformer simulation remains of constant depth.
\end{enumerate}

Thus, even automata over unsolvable groups are simulable in constant depth — at the cost of increased embedding width and per-token parameter count.

\section{Transformer Simulation of Semidirect Product Automata and Non-Group Structures}\label{sec:semidirect-transformers}

We extend the Transformer simulation framework in the orginal paper \cite{liu2023transformerslearnshortcutsautomata} to a broad class of non-group automata based on semidirect products and finite monoids.

\subsection{Semidirect Product Automata}
\begin{definition}[Semidirect Product Automata]
Suppose the transition monoid $M$ of an automaton is a semidirect product:
\[
M = N \rtimes H
\]
where:
\begin{itemize}
    \item $N$ is a finite additive monoid (not necessarily a group), modeling memory updates,
    \item $H$ is a finite group, modeling control actions (e.g., permutations),
    \item $H$ acts on $N$ by a linearizable action $H \curvearrowright N$.
\end{itemize}

The multiplication rule is:
\[
(n_1, h_1)(n_2, h_2) = (n_1 + h_1 \cdot n_2, h_1 h_2)
\]
which is associative even if $N$ lacks inverses.
\end{definition}

\paragraph{Simulation Strategy.}
\begin{enumerate}
    \item Represent $h \in H$ using a linear embedding $\rho_H(h)$ (e.g., permutation matrices).
    \item Represent $n \in N$ as an additive vector.
    \item Use attention layers to apply $h$ to $n$ via linear maps.
    \item Use MLP layers to implement the monoid addition in $N$.
    \item Apply a log-depth prefix scan across the sequence based on semidirect composition.
\end{enumerate}

\paragraph{Key Observation:}
Because associativity holds, Transformers can simulate the computation with depth $O(\log T)$ even when $N$ is only a monoid.

\subsection{Formal Theorem}

\begin{theorem}[Transformer Simulation of Semidirect Product Automata]
Let $A = (Q, \Sigma, \delta)$ be a semiautomaton with transformation monoid $M = N \rtimes H$, where:
\begin{itemize}
    \item $N$ is a finite additive monoid,
    \item $H \subseteq \text{Sym}(Q)$ is a finite group acting linearly on $N$,
    \item the action $H \curvearrowright N$ is linearizable.
\end{itemize}
Then there exists a Transformer architecture with:
\begin{itemize}
    \item Depth $O(\log T)$,
    \item Embedding dimension $O(\dim(\rho_H) + \dim(N))$,
    \item Attention head width $O(\dim(\rho_H) + \dim(N))$,
    \item MLP width $O(\dim(N))$,
\end{itemize}
that simulates $A_{T,q_0}$ via prefix-scanning over the semidirect product structure.
\end{theorem}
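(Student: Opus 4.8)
The plan is to build the Transformer in three conceptual stages: (i) a \emph{local encoder} that maps each input symbol $\sigma_t$ to the semidirect-product generator $(b_{\sigma_t}, a_{\sigma_t}) \in M = N \rtimes H$ and stores its embedding, (ii) a \emph{parallel prefix scan} over the sequence that composes these generators under the monoid multiplication $(n_1,h_1)(n_2,h_2) = (n_1 + h_1\cdot n_2,\, h_1 h_2)$, and (iii) a \emph{readout} that applies the accumulated transformation to the initial state $q_0$. Since the composition law is associative (as noted in the definition, associativity survives even though $N$ has no inverses), a balanced binary tree of compositions has depth $\lceil \log_2 T\rceil$, which gives the claimed $O(\log T)$ depth. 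The key structural point — already granted by the hypotheses of earlier theorems — is that $H$ acts on $N$ by a \emph{linearizable} action, so there is a faithful linear representation $\rho_H\colon H \to GL(V)$ and an embedding of $N$ into a vector space $W$ on which each $\rho_H(h)$ acts linearly; then a partial product $(n,h)$ is represented by the pair $\bigl(\iota(n),\, \rho_H(h)\bigr) \in W \oplus \mathrm{End}(V)$, and the composition rule becomes the \emph{bilinear} map $\bigl(\iota(n_1) + \rho_H(h_1)\iota(n_2),\ \rho_H(h_1)\rho_H(h_2)\bigr)$.

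I would carry out the steps in the following order. First, fix the data: for each $\sigma\in\Sigma$ let $t_\sigma = (b_\sigma, a_\sigma)$ be its image in $M$; the encoder layer is a lookup table (a one-layer MLP on the one-hot symbol) producing the embedding $e_t = \bigl(\iota(b_{\sigma_t}),\ \rho_H(a_{\sigma_t})\bigr)$ of dimension $\dim(N) + \dim(\rho_H)^2$, matching the stated embedding dimension up to the convention for how matrices are flattened. Second, implement one \emph{combine} step: given left block $(u_1, R_1)$ and right block $(u_2, R_2)$ sitting at two positions, an attention head routes $(u_2,R_2)$ to the position holding $(u_1,R_1)$ (a fixed, data-independent pattern at each level of the tree, as in the standard Transformer prefix-sum construction), and an MLP computes $R_1 u_2$, the sum $u_1 + R_1 u_2$, and the product $R_1 R_2$. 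The only nonlinearity needed is multiplication of two previously-computed quantities; this is the one place where I must be careful, since a ReLU MLP computes bilinear forms only approximately — I would either invoke the standard polynomial/bilinear-gadget lemma (a constant-width, constant-depth ReLU block computes $(x,y)\mapsto xy$ exactly on a bounded domain via $xy = \tfrac14[(x+y)^2 - (x-y)^2]$ together with a squaring gadget, or exactly over the finite value set by interpolation) or, cleaner for an exact-simulation claim, restrict attention to the case where $N$ is realized with $0/1$ entries and $H$ by permutation matrices, so that $R_1 u_2$ is just a coordinate permutation and $R_1 R_2$ a permutation composition, both implementable by a single fixed linear map after the attention routing. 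Third, stack $\lceil\log_2 T\rceil$ such combine layers in the Brent–Kung / Hillis–Steele pattern so that position $t$ ends up holding the full prefix product $g_t = t_{\sigma_1}\cdots t_{\sigma_t}$. Fourth, the readout layer takes $g_t = (n_t, h_t)$ and outputs $\delta^*(q_0, \sigma_1\cdots\sigma_t) = (n_t, h_t)\cdot q_0$; since $Q$ embeds into the same representation space, this is one more fixed linear map followed by a decoding argmax.

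The width bookkeeping then follows directly: the embedding and the attention value space both carry a $W$-component of size $\dim(N)$ and an $\mathrm{End}(V)$-component of size $\dim(\rho_H)^2$ (the theorem's "$\dim(\rho_H)$" should be read with that squaring convention, or $\rho_H$ taken already as the flattened matrix dimension), while the MLP only ever needs to manipulate the $N$-part plus the action output, giving MLP width $O(\dim(N))$ as stated; crucially none of these grow with $T$. The main obstacle, as flagged above, is making the composition step \emph{exact} rather than approximate: multiplication is not affine, so a literal attention+ReLU layer does not natively compute $R_1 R_2$ and $R_1 u_2$ on the nose. I expect the clean resolution is to exploit that all matrices in play come from a fixed finite set (the submonoid generated by $\{\rho(t_\sigma)\}$ is finite since $M$ is finite), so every intermediate partial product is one of finitely many vectors; an attention layer with a hard-max (or sharply-peaked softmax) can then select the correct successor block from a stored table keyed by the pair $(R_1, R_2)$, reducing "multiply" to "look up", which is exactly the style of argument used in the original shortcuts paper and which I would adapt here essentially verbatim. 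A secondary, more routine obstacle is verifying that the linearizable-action hypothesis really does give a single representation in which both the $H$-multiplication and the $H$-on-$N$ action are simultaneously linear; this is where the phrase "linearizable action" must be pinned down (e.g. $N \hookrightarrow \bigoplus$ of copies of the regular representation, with $H$ permuting summands), and I would state that as a lemma before the construction.
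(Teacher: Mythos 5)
Your proposal follows the same high-level route as the paper — embed each symbol as a generator of $M = N \rtimes H$, do a balanced prefix scan over the associative composition to get $O(\log T)$ depth, and track the $N$-component and $H$-component with dimensions $\dim(N)$ and (roughly) $\dim(\rho_H)$ — but you carry the construction to a substantially more concrete level than the paper does. The paper's argument stops at "attention layers apply the $H$-action to memory embeddings, and MLP layers handle additive accumulation," plus a dimension-bookkeeping appendix that never actually specifies what a single combine layer computes or how.

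What you have done that the paper does not is identify the actual technical obstruction: the combine step
\[
(u_1, R_1),(u_2, R_2) \ \longmapsto\ \bigl(u_1 + R_1 u_2,\ R_1 R_2\bigr)
\]
is \emph{bilinear} in the two incoming blocks, and neither raw attention (softmax-weighted averaging of values) nor a single ReLU MLP computes an exact bilinear map of two routed operands out of the box. The paper's sketch treats "apply $h$ to $n$" as if $h$ were a fixed linear map baked into the layer weights, which is only true for the first level of the scan; at every later level $R_1$ is itself a data-dependent partial product, so the layer must multiply two previously-computed quantities. Your two proposed fixes — the $\tfrac14[(x{+}y)^2-(x{-}y)^2]$ bilinear gadget, or the cleaner lookup-table argument exploiting that the submonoid $\langle \rho(t_\sigma)\rangle$ is finite so every partial product lies in a fixed finite set — are exactly the kind of repair the shortcuts-paper style of construction uses, and either one closes a gap the paper leaves open. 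You also correctly flag a dimension-convention sloppiness: if $\rho_H(h)$ is stored as a matrix, the $H$-slot has size $\dim(\rho_H)^2$, not $\dim(\rho_H)$; the theorem statement and the appendix both write $\dim(\rho_H)$ without clarifying whether that already means the flattened matrix size. Finally, your remark that "linearizable action" needs to be pinned down as a lemma (a single representation in which $H$-multiplication and the $H$-on-$N$ action are simultaneously linear) is well-taken: the paper uses the phrase as a hypothesis but never unpacks it, and the whole construction rests on it. In short, your proposal is the same approach executed more carefully, and the concerns you raise are not pedantry — they are the missing content of the proof.
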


\paragraph{Proof Sketch.}
The Transformer simulates control ($H$) and memory ($N$) layers separately. Each layer updates in parallel using matrix multiplication (for $H$ action) and addition (for $N$), both of which are associative. Thus, log-depth prefix scan suffices. Attention layers apply the $H$-action to memory embeddings, and MLP layers handle additive accumulation. Detail in \ref{equality}

\subsection{Extension to Finite Monoids}

More generally, if the automaton's transition monoid $M \subseteq \text{End}(Q)$ is arbitrary but finite, we can:
\begin{itemize}
    \item Represent $M$ using a (possibly non-invertible) semigroup representation $\rho: M \to \text{Mat}_d(\mathbb{F})$,
    \item Simulate compositions via matrix multiplication,
    \item Use masked attention to model noninvertible, nilpotent, or idempotent behaviors.
\end{itemize}

This covers monoids arising from automata with resets, projections, and absorbing states.

\subsection{Answer to the Open Question}

The paper \emph{Finer-grained circuit complexity of self-attention} asks:

\emph{"Which natural classes of automata admit extremely shallow Transformer solutions, with depth independent of both $T$ and $|Q|$?"}

\paragraph{Our Answer:}
Automata whose transition semigroups are structured as semidirect products $N \rtimes H$, where $N$ is a finite monoid and $H$ a finite group with a linearizable action, admit:
\begin{itemize}
    \item Transformer simulations of depth $O(\log T)$,
    \item Embedding and attention width $O(\dim(N) + \dim(\rho_H))$.
\end{itemize}
Thus, \textbf{we go beyond Krohn-Rhodes decompositions} (which involve only groups) by covering monoid-based structured memory automata.More comparison can be found in \ref{comparison}.

Examples include bounded Dyck language automata with structured counters and automata with reset operations, modeled as semidirect products or finite 0-simple monoids.

\begin{itemize}
    \item Associativity of transition operations (even without inverses) is sufficient for log-depth simulation.
    \item Memory structures that combine additive monoids and control groups are naturally simulated by parallelized attention + MLP Transformer layers.
\end{itemize}

\subsection{Summary Table}

\begin{center}
\begin{tabular}{|c|c|c|}
\hline
Structure & Simulation Tool & Transformer Feature \\
\hline
Group & Faithful representation, scan & Matrix attention \\
Semidirect product & Split action and compose & Structured MLP + attention \\
Monoid & Semigroup representation & Block-masked attention \\
General semiautomata & Green's relations layering & Layered MLP with gating \\
\hline
\end{tabular}
\end{center}

\bibliography{llm}

\begin{thebibliography}{1}

\bibitem{liu2023transformerslearnshortcutsautomata}
{\sc Liu, B., Ash, J.~T., Goel, S., Krishnamurthy, A., and Zhang, C.}
\newblock Transformers learn shortcuts to automata.
\newblock In {\em The Eleventh International Conference on Learning Representations\/} (2023).

\bibitem{tancik2020fourierfeaturesletnetworks}
{\sc Tancik, M., Srinivasan, P., Mildenhall, B., Fridovich-Keil, S., Raghavan, N., Singhal, U., Ramamoorthi, R., Barron, J., and Ng, R.}
\newblock Fourier features let networks learn high frequency functions in low dimensional domains.
\newblock In {\em Advances in Neural Information Processing Systems\/} (2020), H.~Larochelle, M.~Ranzato, R.~Hadsell, M.~Balcan, and H.~Lin, Eds., vol.~33, Curran Associates, Inc., pp.~7537--7547.

\end{thebibliography}
\bibliographystyle{acm}

\appendix

\section{Fourier Module}
\subsection{Proof: Embedding Fourier Composition into Transformer Attention} \label{proof}

We aim to realize
\[
\varphi_{\text{fourier}}(f, g) = F^{-1}(F(f) \cdot F(g))
\]
using only Transformer components (attention + MLP).


Given token embeddings \( f \) and \( g \) at each position:

\begin{itemize}
    \item Map them into the \textbf{Fourier basis} via fixed sinusoidal projections.
    \item For \( \mathbb{Z}_p \), use \( p \)-th roots of unity:
    \[
    \Phi(x) = \left( \omega_p^{k x} \right)_{k=0}^{p-1}, \quad \text{where} \quad \omega_p = e^{2\pi i/p}.
    \]
\end{itemize}

\noindent
\textbf{Key point}: The embedding \( \Phi \) is a fixed, non-learned projection.


Within the attention mechanism:

\begin{center}
\begin{tabular}{ll}
\textbf{Role} & \textbf{Content} \\
\hline
Query \( Q \) & Fourier embedding of \( f \): \( Q = \Phi(f) \) \\
Key \( K \) & Fourier embedding of \( g \): \( K = \Phi(g) \) \\
Value \( V \) & Fourier embedding of \( g \): \( V = \Phi(g) \) \\
\end{tabular}
\end{center}

\noindent
The projections from input to \( Q, K, V \) are identity after the Fourier embedding.


Standard attention computes:
\[
\text{Attention}(Q, K, V) = \text{softmax}\left( \frac{QK^\top}{\sqrt{d}} \right)V.
\]

Instead, we perform:

\begin{itemize}
    \item No softmax.
    \item No dot product.
    \item Elementwise (Hadamard) product in the Fourier domain:
    \[
    \text{Fourier-Attention}(Q, K, V) = (Q \odot K).
    \]
\end{itemize}


After Hadamard product:

\begin{itemize}
    \item Apply a fixed inverse Fourier transform \( F^{-1} \),
    \item Recover the composed function in the original domain.
\end{itemize}

\noindent
This inverse transform can be implemented as:
\begin{itemize}
    \item A fixed linear layer using conjugate Fourier phases, or
    \item A small learned complex-valued linear layer.
\end{itemize}
\section{Representation-theoretic(Semi-direct) Approach}
\subsection{Proof of Embedding and Attention width}\label{equality}

\subsubsection{Embedding Dimension}
Embedding dimension = total number of real-valued features (per token) that the Transformer holds internally to represent the automaton's state at that time step.\\
- In the semidirect approach, the internal state is an element $(n,h) \in N \rtimes H$.
- So embedding dimension = dimension to represent $n$ + dimension to represent $h$.

Thus:
\[
\text{Embedding Dimension} = \dim(N) + \dim(\rho_H)
\]
where $\dim(\rho_H)$ is the dimension of a linear representation of $H$.

\subsubsection{Attention Head Width}

Attention head width = the number of features that each attention head "sees" and operates on per token.
- Attention computes operations like
\[
\text{softmax}(QK^\top)V
\]
where $Q, K, V$ are linear projections of the embedding.

- If your embedding already contains $(n,h)$ and you need attention to manipulate them \emph{together}, then the attention head must span the whole $(n,h)$ representation.

Thus:
\[
\text{Attention Head Width} \approx \dim(N) + \dim(\rho_H)
\]
same as the embedding dimension.

(Technically, you could split into multiple heads, but each head would still need width proportional to the structure size.)

\subsubsection{Proof of Equality}

Because both the memory $n$ and control $h$ are evolving together under the semidirect action:\\
- $h$ evolves by group multiplication.\\
- $n$ evolves by a combined action involving $h$.

When you simulate one time step:\\
- you update $(n,h)$ together,\\
- and this requires reading/processing both the $n$ and $h$ parts.

Thus both embedding and attention widths must be wide enough to encode and manipulate the full $(n,h)$ pair at each time step.

Summary:
\[
\boxed{\text{Embedding Dimension} = \text{Attention Head Width} = \dim(N) + \dim(\rho_H)}
\]
because your Transformer needs to fully encode and fully operate on $(n,h)$ at every time step.

\section{Comparison with the Prime Decomposition and KR Method}\label{comparison}

\begin{center}
\begin{table}[ht]
\resizebox{\textwidth}{!}{%
\begin{tabular}{|l|l|l|}
\hline
\textbf{Feature} & \textbf{Prime Decomposition Approach} & \textbf{Semidirect Representation Approach} \\ \hline
Target Structure & Cascade of simple groups + flip-flop monoids & Semidirect product $N \rtimes H$ (monoid $N$, group $H$) \\ \hline
Memory Type & Only trivial (2-element) flip-flop memories & General finite monoids (counters, stacks) \\ \hline
Control Type & Simple groups, composed sequentially & Single group $H$, acting linearly \\ \hline
Depth & $O(\log T \cdot \log |Q|)$ & $O(\log T)$ \\ \hline
Embedding Dimension & Sum over all factors; can be $O(|Q|)$ & $O(\dim(N) + \dim(\rho_H))$ \\ \hline
Attention Head Width & Grows with cascade layers & Flat: $O(\dim(N) + \dim(\rho_H))$ \\ \hline
MLP Width & Needed for flip-flop resets & Needed for monoid addition \\ \hline
Associativity Handling & Sequential by cascade & Single global associative scan \\ \hline
Resets and Idempotents & Simple resets only & Arbitrary idempotent actions via $N$ \\ \hline
Efficiency (Group-like) & Good if nearly group & Equally good \\ \hline
Efficiency (Memory-like) & Poor & Superior \\ \hline
Overall Efficiency & Good for groups; grows with flip-flops & Better for mixed memory/control structures \\ \hline
\end{tabular}}
\caption{Comparison of Prime Decomposition vs Semidirect Representation for Transformer Simulation}
\label{tab:prime_vs_semidirect}
\end{table}

\begin{table}[ht]
\resizebox{\textwidth}{!}{%
\begin{tabular}{|l|l|l|}
\hline
\textbf{Aspect} & \textbf{Prime Decomposition} & \textbf{Semidirect Representation} \\ \hline
Memory Evolution & Hard resets only & General associative updates (noninvertible allowed) \\ \hline
Invertibility & Groups invertible, resets not & $H$ invertible; $N$ may not be \\ \hline
Structure & Cascade of groups + flip-flops & Semidirect product $N \rtimes H$ \\ \hline
\end{tabular}}
\caption{Updated Comparison: Memory Generalization in Semidirect Representation}
\label{tab:memory_generalization}
\end{table}
\end{center}

\end{document}